\newcommand{\rank}{{\rm rank}}
\newcommand{\gf}{{\rm GF}}
\newcommand{\calC}{{\mathcal C}}
\newcommand{\C}{{\mathcal C}}
\newcommand{\image}{{\mathrm{Im}}}
\newtheorem{theorem}{Theorem}[section]
\begin{document}

\begin{frontmatter}

\title{Another $q$-Polynomial Approach to Cyclic Codes}

\author{Can Xiang}
\address{College of Mathematics and Informatics, South China Agricultural University,
              Guangzhou, 510642, China. Email: cxiangcxiang@hotmail.com.}

\begin{abstract}
Recently, a $q$-polynomial approach to the construction and analysis of cyclic codes over $\gf(q)$
was given by Ding and Ling. The objective of this paper is to give another $q$-polynomial approach
to all cyclic codes over $\gf(q)$.
\end{abstract}

\begin{keyword}
Cyclic codes, linear codes, $q$-polynomials
\MSC 94B15,94B05, 05B50

\end{keyword}

\end{frontmatter}

\section{Introduction}

Let $q$ be a power of a prime.
An $[n,k, d; q]$ linear code is a $k$-dimensional subspace of $\gf(q)^n$
with minimum nonzero (Hamming) weight $d$.

An $[n,k]$ linear code $\C$ over $\gf(q)$ is called {\em cyclic} if
$(c_0,c_1, \cdots, c_{n-1}) \in \C$ implies $(c_{n-1}, c_0, c_1, \cdots, c_{n-2})
\in \C$.
By identifying a vector $(c_0,c_1, \cdots, c_{n-1}) \in \gf(q)^n$
with the polynomial
$$
c_0+c_1x+c_2x^2+ \cdots + c_{n-1}x^{n-1} \in \gf(q)[x]/(x^n-1),
$$
any code $\C$ of length $n$ over $\gf(q)$ corresponds to a subset of the quotient ring
$\gf(q)[x]/(x^n-1)$.
A linear code $\C$ is cyclic if and only if the corresponding subset in $\gf(q)[x]/(x^n-1)$
is an ideal of the ring $\gf(q)[x]/(x^n-1)$.

It is well known that every ideal of $\gf(q)[x]/(x^n-1)$ is principal. Let $\C=\langle g(x) \rangle$ be a
cyclic code, where $g(x)$ is monic and has the smallest degree among all the
generators of the ideal $\C$. Then $g(x)$ is unique and called the {\em generator polynomial,}
and $h(x)=(x^n-1)/g(x)$ is referred to as the {\em parity-check} polynomial of $\C$.

Cyclic codes are widely employed in consumer electronics, data transmission devices,
broadcast systems, and computer systems as they have efficient encoding and decoding
algorithms. Cyclic codes have been studied for decades and a lot of progress has been made (see, for example, \cite{DLLZ2016,DY2013,Ding13-1,Ding13-2, HPbook,MacSlo} and the references therein).
Three approaches are generally used in the design and analysis of cyclic codes,
and based on a generator matrix, a generator polynomial and a generating idempotent, respectively.
These approaches have their advantages and disadvantages in dealing with cyclic codes.
Recently, a $q$-polynomial approach to the construction and analysis of cyclic codes over $\gf(q)$
was given by Ding and Ling \cite{DingLing}. Further progress was made in \cite{LYX}. The objective
of this paper is to present another $q$-polynomial approach to all cyclic codes over $\gf(q)$.

The remainder of this paper is organized as follows. Section \ref{sec-notations} introduces the
$q$-polynomial approach to cyclic codes developed in \cite{DingLing} and some recent progress
made in \cite{LYX}. Section \ref{sec-qpolycode} presents another $q$-polynomial approach to cyclic
codes over $\gf(q)$. Section \ref{sec-summ} summarizes this paper.

\section{The Ding-Ling $q$-polynomial approach to cyclic codes}\label{sec-notations}

Throughout this paper, we adopt the following notations unless otherwise stated:
\begin{itemize}
\item $p$ is a prime.
\item $q$ is a positive power of $p$.
\item $n$ is a positive integer, and denotes the length of a cyclic code
      over $\gf(q)$.
\item $r=q^n$.
\end{itemize}

Let $\lambda$ be an element of $\mathrm{GF}(r)^{*}$. A $q$-polynomial code with check element $\lambda$ is defined
by
\begin{eqnarray}\label{eq1}
\C_{\lambda}=\Big\{(c_{0},c_{1},\cdots,c_{n-1})\in \mathrm{GF}(q)^{n}: C(\lambda)=0,
\text{ \ where } C(x)=\sum_{i=0}^{n-1}c_{i}x^{q^{i}}\Big\}.
\end{eqnarray}
Ding and Ling defined the code $\C_{\lambda}$ and proved the following result \cite{DingLing}.

\begin{theorem}\label{lem1}
Every cyclic code of length $n$ over $\mathrm{GF}(q)$ can be expressed as a code $\C_{\lambda}$
for some element $\lambda\in \mathrm{GF}(q^{n})$, and is thus a $q$-polynomial code.
\end{theorem}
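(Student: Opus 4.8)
The plan is to recast the condition $C(\lambda)=0$ defining $\C_\lambda$ as an annihilator condition inside the quotient ring $R=\gf(q)[x]/(x^n-1)$, and then to use the Normal Basis Theorem to identify $\gf(q^n)$ with $R$ as a module over $R$. First I would make $V=\gf(q^n)$ into an $R$-module by letting the class of $x$ act as the Frobenius $\sigma\colon z\mapsto z^q$; this is well defined because $\sigma^n=\mathrm{id}$, and for $a(x)=\sum_{i=0}^{n-1}a_ix^i$ it gives $a(x)\cdot z=\sum_{i=0}^{n-1}a_iz^{q^i}$. Identifying a codeword $(c_0,\dots,c_{n-1})$ with $c(x)=\sum_{i=0}^{n-1} c_ix^i\in R$ in the usual way, the associated $q$-polynomial $C(x)=\sum c_ix^{q^i}$ then satisfies $C(\lambda)=c(x)\cdot\lambda$, so that
\[
\C_\lambda=\{\,c(x)\in R: c(x)\cdot\lambda=0\,\}=\mathrm{Ann}_R(\lambda).
\]
Hence it suffices to prove that every ideal of $R$ is of the form $\mathrm{Ann}_R(\lambda)$ for some $\lambda\in V$.

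Next I would invoke the Normal Basis Theorem: there is an $\alpha\in V$ such that $\{\alpha,\alpha^q,\dots,\alpha^{q^{n-1}}\}$ is an $\gf(q)$-basis of $V$. Equivalently, the $R$-linear map $\psi\colon R\to V$, $a(x)\mapsto a(x)\cdot\alpha$, sends the basis $1,x,\dots,x^{n-1}$ of $R$ to a basis of $V$, hence is an isomorphism of $R$-modules. Since $\psi$ is $R$-linear and injective, the identity $a(x)\cdot\psi(f(x))=\psi(a(x)f(x))$ gives $\mathrm{Ann}_R(\psi(f(x)))=\mathrm{Ann}_R(f(x))$ for every $f(x)\in R$. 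So the task reduces to the purely ring-theoretic fact that every ideal of $R$ is the annihilator, under multiplication, of a single element of $R$.

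To finish, let $\C=\langle g(x)\rangle$ be an arbitrary cyclic code of length $n$, with $g(x)\mid x^n-1$, and set $h(x)=(x^n-1)/g(x)$. Working in the UFD $\gf(q)[x]$ and using $x^n-1=g(x)h(x)$ with $h(x)\neq 0$, we have $a(x)h(x)\equiv 0\pmod{x^n-1}$ iff $g(x)h(x)\mid a(x)h(x)$ iff $g(x)\mid a(x)$, whence $\mathrm{Ann}_R(h(x))=\langle g(x)\rangle=\C$. Taking $\lambda:=\psi(h(x))=\sum_{i=0}^{n-1}h_i\alpha^{q^i}$, where $h(x)=\sum h_ix^i$, we conclude $\C_\lambda=\mathrm{Ann}_R(\lambda)=\mathrm{Ann}_R(h(x))=\C$. (The only code not captured by a nonzero $\lambda$ is $\C=\gf(q)^n$, corresponding to $g(x)=1$, which forces $\lambda=0$; this is handled either by allowing $\lambda=0$ in \eqref{eq1}, in which case $\C_0=\gf(q)^n$ trivially, or as a separate case.)

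The one genuinely nontrivial ingredient is the Normal Basis Theorem: it is exactly what converts the opaque problem ``produce a check element $\lambda\in\gf(q^n)$'' into the transparent one ``produce an element of $R$ with a prescribed annihilator,'' after which the remaining steps are routine. The point that needs care is the bookkeeping making the two identifications compatible --- codeword $\leftrightarrow$ element of $R$ on one side and $\lambda\leftrightarrow$ element of $R$ via $\psi$ on the other --- so that $q$-polynomial evaluation at $\lambda$ really does correspond to multiplication in $R$; once that is pinned down, the theorem drops out.
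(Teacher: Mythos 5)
Your argument is correct. Note, however, that the paper itself gives no proof of Theorem \ref{lem1}: it quotes the result from Ding and Ling \cite{DingLing}, and the nearest in-paper statement is Theorem \ref{lem2} (Lv--Yan--Xiao), which exhibits the check element $\lambda=\sum_i\lambda_i\alpha^{q^i}$ built from the parity-check polynomial and a normal element $\alpha$. Your proof in effect re-derives exactly that refinement: the element you construct, $\lambda=\psi(h(x))=\sum_i h_i\alpha^{q^i}$ with $h(x)=(x^n-1)/g(x)$, is precisely the Lv--Yan--Xiao check element. What your route adds is conceptual packaging: making $\gf(q^n)$ an $R$-module for $R=\gf(q)[x]/(x^n-1)$ with $x$ acting as Frobenius turns evaluation of the $q$-polynomial $C$ at $\lambda$ into the product $c(x)\cdot\lambda$, so $\C_\lambda=\mathrm{Ann}_R(\lambda)$ is an ideal (hence automatically cyclic), the Normal Basis Theorem becomes the statement that $\gf(q^n)\cong R$ as $R$-modules, and the whole theorem reduces to the elementary cancellation fact $\mathrm{Ann}_R(h)=\langle g\rangle$ when $g(x)h(x)=x^n-1$. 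This is considerably simpler than Ding and Ling's original argument (which the paper itself notes is harder than the LYX proof), and it is the ``kernel-side'' counterpart of the paper's own new construction in Theorems \ref{thm1} and \ref{thm2}, where cyclic codes are instead realized as images $\image(\ell_{(n,q)})$ in normal-basis coordinates. Your remark on the boundary case $\C=\gf(q)^n$ is also the right one to make: it forces $\lambda=0$, which the definition (\ref{eq1}) formally excludes by requiring $\lambda\in\gf(r)^*$, while the theorem statement allows $\lambda\in\gf(q^n)$, so either convention must be stated explicitly as you did.
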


The Normal Basis Theorem in finite fields tells us that $\gf(r)$ has a {\em normal basis}
$\{\alpha, \alpha^q, \cdots, \alpha^{q^{n-1}}\}$ over
$\gf(q)$, where $\alpha \in \gf(r)^*$. Such an $\alpha$ is called a
{\em normal element} of $\gf(r)$ over $\gf(q)$.
Then $\lambda$ has a unique expression of the form
\begin{eqnarray}\label{eqn-betabasis}
\lambda = \sum_{i=0}^{n-1} \lambda_i \alpha^{q^i},
\end{eqnarray}
where all $\lambda_i \in \gf(q)$.

Ding and Ling also proved the following result on the dimension of the code $\C_{\lambda}$ \cite{DingLing}.

\begin{theorem}\label{thm-dimension}
Let $\lambda$ be as in (\ref{eqn-betabasis}), and let $\lambda=\gamma^s$ for some $s \ge 0$,
where $\gamma$ is a generator of $\gf(r)^*$.
The dimension of the code  $\calC_\lambda$ is equal to $n-\rank(B_\lambda)$, where the matrix
$B_\lambda$ is defined by
\begin{eqnarray}\label{eqn-matrixB}
B_\lambda = \left[ \begin{array}{llllll}
                         \lambda_0  & \lambda_{n-1} & \lambda_{n-2} & \cdots & \lambda_2 & \lambda_1 \\
                         \lambda_1  & \lambda_{0} & \lambda_{n-1} & \cdots & \lambda_3 & \lambda_2 \\
                         \lambda_2  & \lambda_{1} & \lambda_{0} & \cdots & \lambda_4 & \lambda_3 \\
                         \vdots  & \vdots & \vdots & \cdots & \vdots & \vdots \\
                         \lambda_{n-1}  & \lambda_{n-2} & \lambda_{n-3} & \cdots & \lambda_1 & \lambda_0
                         \end{array}
                \right],
\end{eqnarray}
and these $\lambda_i$ are defined in (\ref{eqn-betabasis}).

Furthermore, the dimension of the code  $\calC_\lambda$ is no less than $n-\ell_s$,
where $\ell_s$ denotes the size of the $q$-cyclotomic coset modulo $r-1$ containing $s$.
\end{theorem}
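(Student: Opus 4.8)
The plan is to realize $\calC_\lambda$ as the kernel of one explicit $\gf(q)$-linear map $\phi$, to compute the matrix of $\phi$ with respect to the normal basis and observe that it is $B_\lambda$, and finally to bound $\rank(B_\lambda)=\dim_{\gf(q)}\image(\phi)$ by a short periodicity argument.

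Concretely, define $\phi\colon\gf(q)^n\to\gf(r)$ by $\phi(c_0,c_1,\dots,c_{n-1})=C(\lambda)=\sum_{i=0}^{n-1}c_i\lambda^{q^i}$. This map is $\gf(q)$-linear and, by the definition (\ref{eq1}), its kernel is exactly $\calC_\lambda$; hence $\dim_{\gf(q)}\calC_\lambda=n-\dim_{\gf(q)}\image(\phi)$. Next I would compute $\phi$ in coordinates with respect to the normal basis $\{\alpha^{q^0},\dots,\alpha^{q^{n-1}}\}$. Substituting $\lambda=\sum_{j=0}^{n-1}\lambda_j\alpha^{q^j}$ and using that each $\lambda_j\in\gf(q)$ is fixed by the $q$-th power map and that $\alpha^{q^m}$ depends only on $m\bmod n$ (because $\alpha^{q^n}=\alpha$), one gets $\lambda^{q^i}=\sum_{k=0}^{n-1}\lambda_{(k-i)\bmod n}\,\alpha^{q^k}$, and therefore
\[
\phi(c_0,\dots,c_{n-1})=\sum_{k=0}^{n-1}\Bigl(\sum_{i=0}^{n-1}\lambda_{(k-i)\bmod n}\,c_i\Bigr)\alpha^{q^k}.
\]
Thus, identifying $\gf(r)$ with $\gf(q)^n$ via the normal basis, $\phi$ is the linear map given by the matrix whose $(k,i)$-entry is $\lambda_{(k-i)\bmod n}$; matching indices modulo $n$, this is exactly the matrix $B_\lambda$ in (\ref{eqn-matrixB}). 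Consequently $\dim_{\gf(q)}\image(\phi)=\rank(B_\lambda)$ and $\dim_{\gf(q)}\calC_\lambda=n-\rank(B_\lambda)$.

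For the lower bound, I would observe that $\image(\phi)$ is contained in the $\gf(q)$-subspace $W$ spanned by $\{\lambda^{q^i}:0\le i\le n-1\}$. Writing $\lambda=\gamma^s$, the defining relation of the $q$-cyclotomic coset, $sq^{\ell_s}\equiv s\pmod{r-1}$, gives $\lambda^{q^{\ell_s}}=\gamma^{sq^{\ell_s}}=\gamma^s=\lambda$, so the sequence $\lambda,\lambda^q,\lambda^{q^2},\dots$ has period dividing $\ell_s$. Hence $W=\mathrm{span}_{\gf(q)}\{\lambda^{q^0},\dots,\lambda^{q^{\ell_s-1}}\}$, which has $\gf(q)$-dimension at most $\ell_s$; therefore $\rank(B_\lambda)=\dim_{\gf(q)}\image(\phi)\le\ell_s$ and $\dim_{\gf(q)}\calC_\lambda\ge n-\ell_s$.

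I do not anticipate a genuine obstacle here: the argument is essentially a change of basis followed by a one-line periodicity remark about the conjugates of $\lambda$. The only place that needs care is the index bookkeeping in the step identifying the matrix of $\phi$ with the circulant $B_\lambda$ — keeping the shifts modulo $n$ straight and being consistent about which of $k$ and $i$ indexes rows versus columns.
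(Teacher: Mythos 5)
Your proof is correct. Note that the paper itself does not prove this statement (it is quoted from the Ding--Ling reference \cite{DingLing}); your argument --- kernel of the evaluation map $c \mapsto C(\lambda)$, expansion of $\lambda^{q^i}$ in the normal basis to identify the matrix of that map with the circulant $B_\lambda$, and the periodicity $\lambda^{q^{\ell_s}}=\lambda$ for the cyclotomic-coset bound --- is the standard one and is essentially the same circulant computation the paper uses in its proof of Theorem \ref{thm1}.
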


Lv, Yan and Xiao made further progress on $q$-polynomial codes by proving the following \cite{LYX}.

\begin{theorem}\label{lem2}
A cyclic code of length $n$ over $\mathrm{GF}(q)$ with parity-check polynomial $\lambda(x)=\lambda_{0}+\lambda_{1}x+\cdots+\lambda_{h}x^{h}$ $(h\leq n-1)$ is a $q$-polynomial code with check element $\lambda=\lambda_{0}\alpha+\lambda_{1}\alpha^{q}+\cdots+\lambda_{h}\alpha^{q^{h}}$, where all $\lambda_i\in \gf(q)$, and $\alpha$ is a normal element of $\mathrm{GF}(r)$ over $\mathrm{GF}(q)$.
\end{theorem}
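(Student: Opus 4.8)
The plan is to rewrite the defining condition $C(\lambda)=0$ of $\C_\lambda$ as a polynomial congruence modulo $x^n-1$ and then recognise it as membership in the cyclic code with parity-check polynomial $\lambda(x)$. First I would set $\lambda_j=0$ for $h<j\le n-1$, so that $\lambda=\sum_{j=0}^{n-1}\lambda_j\alpha^{q^j}$ is exactly the normal-basis expansion (\ref{eqn-betabasis}) of the check element with respect to the normal element $\alpha$. Then, for $\bc=(c_0,\ldots,c_{n-1})\in\gf(q)^n$, I would expand $C(\lambda)=\sum_{i=0}^{n-1}c_i\lambda^{q^i}$; since every $\lambda_j\in\gf(q)$ and $\alpha^{q^n}=\alpha$, applying the $i$-th power of the Frobenius map gives $\lambda^{q^i}=\sum_{j=0}^{n-1}\lambda_j\alpha^{q^{(i+j)\bmod n}}$, and collecting the coefficient of each basis element $\alpha^{q^k}$ yields
\begin{eqnarray*}
C(\lambda)=\sum_{k=0}^{n-1}\Big(\sum_{i=0}^{n-1}c_i\lambda_{(k-i)\bmod n}\Big)\alpha^{q^k}.
\end{eqnarray*}

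Because $\{\alpha,\alpha^q,\ldots,\alpha^{q^{n-1}}\}$ is a basis of $\gf(r)$ over $\gf(q)$, the condition $C(\lambda)=0$ is equivalent to the vanishing of all $n$ inner sums. But these inner sums are precisely the coefficients of the product $c(x)\lambda(x)$ computed in $\gf(q)[x]/(x^n-1)$, where $c(x)=\sum_i c_i x^i$ and $\lambda(x)=\sum_j\lambda_j x^j$. Hence $\bc\in\C_\lambda$ if and only if $c(x)\lambda(x)\equiv 0\pmod{x^n-1}$. Now I would invoke that $\lambda(x)$ is the parity-check polynomial of $\C$, so $x^n-1=g(x)\lambda(x)$ for the generator polynomial $g(x)$ of $\C$. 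If $c(x)\in\C=\langle g(x)\rangle$, then $g(x)\mid c(x)$, whence $c(x)\lambda(x)$ is a multiple of $g(x)\lambda(x)=x^n-1$. Conversely, if $x^n-1\mid c(x)\lambda(x)$, write $c(x)\lambda(x)=(x^n-1)t(x)=g(x)\lambda(x)t(x)$ and cancel the nonzero factor $\lambda(x)$ in the integral domain $\gf(q)[x]$ to get $c(x)=g(x)t(x)$; since $\deg c(x)\le n-1$, this means $c(x)\in\langle g(x)\rangle=\C$. Combining the two equivalences gives $\C_\lambda=\C$, as desired.

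The only delicate point is the reindexing $i+j\equiv k\pmod n$: one must check that the sum $\sum_i c_i\lambda^{q^i}$ produces the \emph{cyclic} convolution of $(c_i)$ and $(\lambda_j)$ (indices reduced modulo $n$, which is exactly where $\alpha^{q^n}=\alpha$ enters), so that it matches multiplication in the quotient ring $\gf(q)[x]/(x^n-1)$ rather than ordinary polynomial multiplication. The remaining steps---the Frobenius expansion, the appeal to linear independence of the normal basis, and the two-directional divisibility argument using $x^n-1=g(x)\lambda(x)$---are routine. As a by-product, this argument reproves Theorem \ref{lem1}, since every cyclic code has a parity-check polynomial.
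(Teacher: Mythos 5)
Your proof is correct: the normal-basis expansion of $\lambda^{q^i}$ turns the condition $C(\lambda)=0$ into the vanishing of the cyclic convolution of $(c_i)$ and $(\lambda_j)$, i.e.\ $c(x)\lambda(x)\equiv 0 \pmod{x^n-1}$, and the divisibility argument with $g(x)\lambda(x)=x^n-1$ then identifies $\C_\lambda$ with the cyclic code having parity-check polynomial $\lambda(x)$. Note that the paper itself states this theorem without proof (it is quoted from \cite{LYX}), so there is nothing to diverge from; your computation is the standard one and is essentially the same convolution identity the author uses in the proof of Theorem \ref{thm1}, so your argument is in the expected spirit and, as you observe, it also yields Theorem \ref{lem1} as a corollary.
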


Theorem \ref{lem2} also proves that every cyclic code over $\gf(q)$ is a $q$-polynomial code $\C_\lambda$.
It is interesting to notice that the proof of Theorem \ref{lem2} given by Lv, Yan and Xiao is much easier than
that of Theorem \ref{lem1} given by Ding and Ling \cite{DingLing}. BCH codes are a special type of cyclic codes.
Lv, Yan and Xiao also defined the $q$-BCH codes, which can be viewed as an analogue of the classical BCH codes
\cite{LYX}.

In the next section, we will give another $q$-polynomial treatment of all cyclic codes over $\gf(q)$.

\section{Another $q$-polynomial approach to cyclic codes over $\gf(q)$}\label{sec-qpolycode}

A $q$-polynomial over $\gf(q)$ is a polynomial of the form
$\ell(x)=\sum_{i=0}^{h} \ell_i x^{q^i}$ with all coefficients
$\ell_i$ in $\gf(q)$ and $h$ being a nonnegative integer.

\subsection{Description of $q$-polynomial image codes}\label{sec-qpcode}

Let
\begin{eqnarray}\label{eqn-ellpolynom}
\ell_{(n, q)}(x)=\sum_{i=0}^{n-1} \ell_i x^{q^i}
\end{eqnarray}
be a $q$-polynomial over $\gf(q)$, which is a subfield of $\gf(q^n)$. We define
\begin{eqnarray}\label{eqn-image}
\image(\ell_{(n, q)})=\{\ell_{(n, q)}(y): y \in \gf(q^n)\}.
\end{eqnarray}

Given a normal basis $\{\beta, \beta^{q}, \beta^{q^2}, \cdots, \beta^{q^{n-1}}\}$ of $\gf(q^n)$
over $\gf(q)$, every $c \in \gf(q^n)$ can be expressed as
$$
c=\sum_{i=0}^{n-1} c_i \beta^{q^i}.
$$
This gives a one-to-one correspondence between $\gf(q^n)$ and $\gf(q)^n$ with the fixed normal element $\beta$.

Define
\begin{eqnarray}\label{eqn-mmcode}
\calC_{(n, \beta, \ell)}=\left\{ (c_0, c_1, \cdots, c_{n-1}) \in \gf(q)^n: c=\sum_{i=0}^{n-1} c_i \beta^{q^i}
\in \image(\ell_{(n, q)}) \right\}.
\end{eqnarray}
Clearly, $\calC_{(n, \beta, \ell)}$ is a linear code over $\gf(q)$ of length $n$, and is called the
{\em $q$-polynomial image code} of the  $q$-polynomial $\ell_{(n, q)}(x)$ of (\ref{eqn-ellpolynom}).

\subsection{Main results about the $q$-polynomial image code $\calC_{(n, \beta, \ell)}$}\label{sec-genematrix}

The main results of this paper are stated in the following two theorems.

\begin{theorem}\label{thm1}
Let symbols and notations be as above. Then the generator matrix of the $q$-polynomial image code $\calC_{(n, \beta, \ell)}$
is given by
\begin{equation}\label{genematrixG}
    G=
    \left[
      \begin{array}{ccccccccc}
        \ell_0 & \ell_{n-1} & \ell_{n-2} & \cdots & \ell_2 & \ell_1 \\
        \ell_1 & \ell_{0} & \ell_{n-1}   & \cdots & \ell_3 & \ell_2 \\
        \vdots & \vdots & \vdots & \cdots & \vdots & \vdots \\
        \ell_{n-1} & \ell_{n-2} & \ell_{n-3} & \cdots & \ell_1 & \ell_0 \\
      \end{array}
    \right],
\end{equation}
and thus $\calC_{(n, \beta, \ell)}$ is a cyclic code over $\gf(q)$ of length $n$.
\end{theorem}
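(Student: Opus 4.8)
The plan is to compute the image $\image(\ell_{(n,q)})$ explicitly in the normal-basis coordinates and read off a spanning set. First I would fix the normal basis $\{\beta,\beta^q,\dots,\beta^{q^{n-1}}\}$ and evaluate $\ell_{(n,q)}$ on the basis elements themselves: for $y=\beta^{q^j}$ one gets
\[
\ell_{(n,q)}(\beta^{q^j})=\sum_{i=0}^{n-1}\ell_i\,\beta^{q^{i+j}}=\sum_{k=0}^{n-1}\ell_{k-j}\,\beta^{q^{k}},
\]
where the index $k-j$ is read modulo $n$ (this uses $\beta^{q^n}=\beta$ and that $\ell_i\in\gf(q)$, so $\ell_i^{q^j}=\ell_i$). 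Thus the coordinate vector of $\ell_{(n,q)}(\beta^{q^j})$ with respect to $\beta$ is the $j$-th cyclic shift of $(\ell_0,\ell_1,\dots,\ell_{n-1})$, which is exactly the $(j{+}1)$-st row of the matrix $G$ in (\ref{genematrixG}). Since $\ell_{(n,q)}$ is an additive ($\gf(q)$-linear, even) map on $\gf(q^n)$ and $\{\beta^{q^j}\}$ spans $\gf(q^n)$ over $\gf(q)$, its image is the $\gf(q)$-span of these $n$ vectors, hence the row space of $G$. This shows $\calC_{(n,\beta,\ell)}$ is precisely the row space of $G$, so $G$ is a generator matrix for it (a generator matrix need not have full row rank in the usual convention, but one may also pass to a maximal independent subset of rows if a stricter convention is wanted).

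Next I would observe that $G$ is a circulant matrix: its $(j{+}1)$-st row is the cyclic shift by $j$ of its first row. It is a standard fact that the row space of a circulant matrix is invariant under the cyclic shift $(c_0,\dots,c_{n-1})\mapsto(c_{n-1},c_0,\dots,c_{n-2})$ — concretely, shifting row $j$ of $G$ gives row $j{+}1$ of $G$ (indices mod $n$), so shifting any $\gf(q)$-linear combination of rows is again such a combination. Therefore $\calC_{(n,\beta,\ell)}$ is closed under the cyclic shift, i.e. it is a cyclic code of length $n$ over $\gf(q)$, completing the proof.

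The only genuinely delicate point is the index bookkeeping in the display above: one must carefully justify $\ell_{(n,q)}(\beta^{q^j})=\sum_k \ell_{k-j\bmod n}\,\beta^{q^k}$, in particular that the exponents of $\beta$ wrap around modulo $n$ (from $\beta^{q^n}=\beta$) while the coefficients $\ell_i$ do \emph{not} wrap — they are simply reindexed — and that this reindexing produces exactly the cyclic-shift pattern appearing in the rows of $G$. Everything else is routine: $\gf(q)$-linearity of $q$-polynomials, the fact that the image of a linear map equals the span of the images of a spanning set, and the elementary observation that circulant row spaces are shift-invariant. I expect no obstacle beyond getting these indices to line up cleanly with the stated matrix $G$.
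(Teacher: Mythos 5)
Your route is essentially the paper's: the paper expands a general $y=\sum_j y_j\beta^{q^j}$, uses the Frobenius shift on the normal basis to get the convolution identity $c_k=\sum_j \ell_j y_{(k-j)\bmod n}$, i.e.\ $c=Gy$, and reads off the circulant structure; you simply specialize this computation to the basis vectors $y=\beta^{q^j}$ and invoke linearity of $\ell_{(n,q)}$, which is the same calculation packaged as ``image $=$ span of the images of a basis.'' The cyclicity argument (shift-invariance of the span of all cyclic shifts of a fixed vector) is also the same in substance.

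However, the one step you flagged as delicate is exactly where your indexing goes wrong. The coordinate vector of $\ell_{(n,q)}(\beta^{q^j})$ has $k$-th entry $\ell_{(k-j)\bmod n}$; that is the $(j{+}1)$-st \emph{column} of $G$, not its $(j{+}1)$-st row, whose $k$-th entry is $\ell_{(j-k)\bmod n}$. The rows of $G$ are the cyclic shifts of the reversed tuple $(\ell_0,\ell_{n-1},\dots,\ell_1)$, while your spanning vectors are the shifts of $(\ell_0,\ell_1,\dots,\ell_{n-1})$. So what you have actually proved is that $\calC_{(n,\beta,\ell)}$ is the row space of $G^{T}$, i.e.\ the cyclic code generated by $\sum_i\ell_i x^i$ in $\gf(q)[x]/(x^n-1)$; the row space of $G$ is the cyclic code generated by the reciprocal-type polynomial $\sum_i \ell_i x^{(n-i)\bmod n}$, and these two codes coincide only when $\ell_i=\ell_{(n-i)\bmod n}$ for all $i$ (for general $\ell$ they are equivalent but distinct, e.g.\ the two $[7,4]$ binary Hamming codes). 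To be fair, the paper's own proof commits the same conflation: it derives $c=Gy$, which says the code is the \emph{column} space of $G$, and then calls $G$ ``the generator matrix.'' So your argument is on par with the published one; the conclusion that $\calC_{(n,\beta,\ell)}$ is cyclic is sound either way, but to match the matrix in the statement literally you would need to replace $G$ by $G^{T}$ (or note explicitly that $G$ is being used to act on coordinate columns rather than as a row-generator).
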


\begin{proof}
By the Normal Basis Theorem, every $y \in \gf(q^n)$ has a unique expression of the form
$$
y=\sum_{i=0}^{n-1} y_i \beta^{q^i},
$$
where all $y_i\in \gf(q)$. It follows from
(\ref{eqn-ellpolynom}) that
\begin{eqnarray}\label{eqn-ellpolynom1}
\ell_{(n, q)}(y)&=&\sum_{i=0}^{n-1} \ell_i y^{q^i} \nonumber \\
&=&\sum_{i=0}^{n-1} \ell_i \left(\sum_{i=0}^{n-1} y_i \beta^{q^i}\right)^{q^i} \nonumber \\
&=&\sum_{i=0}^{n-1} \left(\sum_{j=0}^{n-1}\ell_j~y_{(i-j) \bmod{n}}\right)~\beta^{q^i},
\end{eqnarray}
where all $\ell_i\in \gf(q)$ and $y_i \in \gf(q) $.

By (\ref{eqn-image}) and (\ref{eqn-ellpolynom1}), it is easily seen that the code $\calC_{(n, \beta, \ell)}$ of (\ref{eqn-mmcode}) can be
expressed as
\begin{align*}
\calC_{(n, \beta, \ell)}=&\Big\{ (c_0, c_1, \cdots, c_{n-1}) \in \gf(q)^n: c_i=\sum_{j=0}^{n-1}\ell_j~y_{(i-j)~mod~n}~\\
&\text{ \ for all } y=(y_1,y_2,\cdots,y_n)\in \gf(q)^n  \Big\}.
\end{align*}
This leads to
\begin{equation}\label{eq7}
    \left[
      \begin{array}{cccccc}
        \ell_0 & \ell_{n-1} & \ell_{n-2} & \cdots & \ell_2 & \ell_1 \\
        \ell_1 & \ell_{0} & \ell_{n-1}   & \cdots & \ell_3 & \ell_2 \\
        \vdots & \vdots & \vdots & \cdots & \vdots & \vdots \\
        \ell_{n-1} & \ell_{n-2} & \ell_{n-3} & \cdots & \ell_1 & \ell_0 \\
      \end{array}
    \right]
    \left[
      \begin{array}{c}
        y_{0} \\
        y_{1} \\
        \vdots\\
        y_{n-1} \\
      \end{array}
    \right]
    =
    \left[
      \begin{array}{c}
        c_{0} \\
        c_{1} \\
        \vdots\\
        c_{n-1} \\
      \end{array}
    \right].
\end{equation}
Hence, $G$ is the generator matrix of the code $\calC_{(n, \beta, \ell)}$. Since $G$ is a circulant matrix,
$\calC_{(n, \beta, \ell)}$ is a cyclic code over $\gf(q)$ of length $n$ and its dimension is equal to
$\textmd{rank}(G)$. This completes the proof.
\end{proof}

\begin{theorem}\label{thm2}
Let symbols and notations be as above. Then every cyclic code over $\gf(q)$ of length $n$ can be expressed
as a code $\calC_{(n, \beta, \ell)}$ of (\ref{eqn-mmcode}) for a normal element $\beta \in \gf(q^n)$ and
a $q$-polynomial $\ell(x)$ over $\gf(q)$.
\end{theorem}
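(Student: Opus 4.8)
The plan is to exploit Theorem~\ref{thm1}, which tells us that every $q$-polynomial image code $\calC_{(n,\beta,\ell)}$ is cyclic with a circulant generator matrix $G$ whose first column is $(\ell_0,\ell_1,\dots,\ell_{n-1})^{T}$. So the real task is the converse: given an arbitrary cyclic code $\calC$ over $\gf(q)$ of length $n$, produce a normal element $\beta$ and a $q$-polynomial $\ell(x)=\sum_{i=0}^{n-1}\ell_i x^{q^i}$ such that $\calC_{(n,\beta,\ell)}=\calC$. First I would recall that $\calC=\langle g(x)\rangle$ for a monic divisor $g(x)$ of $x^n-1$, and its generator matrix in circulant form can be read off from $g(x)$: the cyclic code $\calC$ is spanned by the $n$ cyclic shifts of the coefficient vector of $g(x)$ (padded with zeros to length $n$), i.e. $\calC$ is the row span of the circulant matrix whose first row is $(g_0,g_1,\dots,g_{n-k},0,\dots,0)$, where $k=n-\deg g$.

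Next I would match this against the shape of $G$ in Theorem~\ref{thm1}. The circulant matrix $G$ there has first \emph{row} $(\ell_0,\ell_{n-1},\ell_{n-2},\dots,\ell_1)$, equivalently first column $(\ell_0,\ell_1,\dots,\ell_{n-1})^T$; its row span is the cyclic code generated by the polynomial $\ell_0+\ell_1 x+\dots+\ell_{n-1}x^{n-1} \in \gf(q)[x]/(x^n-1)$. Hence, to realize $\calC=\langle g(x)\rangle$, the natural choice is simply to set $\ell_i = g_i$ for $0\le i\le \deg g$ and $\ell_i=0$ for $\deg g < i \le n-1$, so that $\ell(x)=\sum_i g_i x^{q^i}$ is the $q$-associate of $g(x)$. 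With this choice, the circulant $G$ from Theorem~\ref{thm1} is exactly the circulant generator matrix of $\langle g(x)\rangle$, so its row span equals $\calC$. For $\beta$ one takes any normal element of $\gf(q^n)$ over $\gf(q)$, whose existence is guaranteed by the Normal Basis Theorem; the identification $c=\sum_i c_i\beta^{q^i}$ in (\ref{eqn-mmcode}) then makes $\calC_{(n,\beta,\ell)}$ the set of coefficient vectors, over this basis, of $\image(\ell_{(n,q)})$, and Theorem~\ref{thm1} identifies that set with the row span of $G$.

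I would then close the loop: by Theorem~\ref{thm1}, $\calC_{(n,\beta,\ell)}$ has generator matrix $G$, and by the matching above $G$ generates $\langle g(x)\rangle = \calC$; therefore $\calC_{(n,\beta,\ell)}=\calC$, which is the assertion. Since $\beta$ is arbitrary normal, the construction works for every normal element; only $\ell$ is pinned down (up to the freedom already present in choosing a generator of the ideal).

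The part requiring the most care is verifying that the circulant matrix in Theorem~\ref{thm1}, with first column equal to the padded coefficient vector of $g(x)$, indeed has row span equal to $\langle g(x)\rangle$ rather than to some other ideal (e.g.\ $\langle \hat g(x)\rangle$ for the reciprocal, because of the index reversal $\ell_{(i-j)\bmod n}$ in the circulant's rows versus columns). One must check carefully which cyclic code a given circulant generates: the rows of $G$ are the vectors $(\ell_{-j},\ell_{1-j},\dots,\ell_{n-1-j})$ for $j=0,\dots,n-1$ (indices mod $n$), i.e.\ the cyclic shifts of $(\ell_0,\ell_1,\dots,\ell_{n-1})$, whose span is $\langle \gcd(\ell(x),x^n-1)\rangle$ with $\ell(x)=\sum_i \ell_i x^i$. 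So taking $\ell_i=g_i$ gives $\gcd(g(x),x^n-1)=g(x)$ since $g(x)\mid x^n-1$, and the span is exactly $\calC$. Once this bookkeeping is pinned down, the theorem follows immediately from Theorem~\ref{thm1} and the classical structure theory of cyclic codes.
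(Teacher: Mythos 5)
Your construction is correct, but it is not the paper's: the paper takes the \emph{reversed} coefficients, setting $\ell_0=g_0$, $\ell_{n-i}=g_i$ for $1\le i\le s$ and $\ell_i=0$ otherwise, so that the circulant $G$ of Theorem~\ref{thm1} literally equals the circulant matrix $G_1$ built from $g(x)$, and then concludes $C=\calC_{(n,\beta,\ell)}$ by citing Theorem~\ref{thm1}'s claim that $G$ is the generator matrix. You instead take $\ell_i=g_i$ (the $q$-associate of $g$) and identify $\calC_{(n,\beta,\ell)}$ with the ideal $\langle\gcd(\ell_0+\ell_1x+\cdots+\ell_{n-1}x^{n-1},\,x^n-1)\rangle$, which for your choice is $\langle g(x)\rangle=C$. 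This difference is not cosmetic: by (\ref{eq7}) the code is $\{Gy: y\in\gf(q)^n\}$, i.e.\ the span of the \emph{columns} of $G$, which are exactly the cyclic shifts of $(\ell_0,\dots,\ell_{n-1})$; traced through this, the paper's reversed choice yields $\langle g^*(x)\rangle$ with $g^*(x)=x^sg(1/x)$, the reciprocal code, which differs from $C$ whenever $C$ is not reversible (e.g.\ $q=2$, $n=7$, $g=1+x+x^3$). That discrepancy is masked by Theorem~\ref{thm1}'s silent identification of the column space $\{Gy\}$ with the row space of $G$ (strictly the generator matrix should be $G^{T}$); the statement of Theorem~\ref{thm2} survives either way, but your choice is the one consistent with the definition (\ref{eqn-mmcode}). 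Your only slip is terminological: the vectors $(\ell_{-j},\ell_{1-j},\dots,\ell_{n-1-j})$ you list are columns, not rows, and the ``row span'' you compute is really the column span $\{Gy\}$ --- which is precisely the code, so your identification and the conclusion $\calC_{(n,\beta,\ell)}=\langle g(x)\rangle=C$ stand. In short, your route (direct $q$-associate plus the gcd/ideal identification) is cleaner and dodges the reciprocal-code pitfall, while the paper's route buys an exact matrix match at the price of relying on the row/column convention asserted, but not actually established, in Theorem~\ref{thm1}.
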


\begin{proof}
For any cyclic code $C$ over $\gf(q)$ of length $n$, we suppose that $C$ is an $[n,n-s]$ cyclic code over $\gf(q)$ with  generator polynomial $g(x)=\sum_{i=0}^{s}g_ix^i$, where all $g_i \in \gf(q)$. Then
\begin{equation}\label{genematrixG1}
    G_1=
    \left[
      \begin{array}{lllllllll}
        g_0 & g_1 & g_2 & \cdots & g_s & 0 & 0 & \cdots & 0 \\
        0 & g_0 & g_1 & g_2 & \cdots & g_s & 0 & \cdots & 0 \\
        0 & 0 & g_0 & g_1 & g_2 & \cdots & g_s  & \cdots & 0 \\
        \vdots & \vdots & \ddots & \ddots & \ddots & \ddots & \cdots & \ddots & \vdots \\
        0 & 0 & 0 & 0 & g_0 & g_1 & g_2 & \cdots & g_s  \\
        g_s & 0 & 0 & 0 & 0 & g_0 & g_1 & \cdots & g_{s-1}  \\
        g_{s-1} & g_s & 0 & 0 & 0 & 0 & g_0 & \cdots & g_{s-2}  \\
        \vdots & \vdots & \vdots & \vdots & \vdots & \vdots & \vdots & \ddots & \vdots  \\
        g_1 & g_2 & g_3 & \cdots & g_s & 0 & 0 & \cdots & g_0
      \end{array}
    \right]
\end{equation}
is a generator matrix of the code $C$ and $\textmd{rank}(G_1)=n-s$.

Let  $\beta$ be a normal element of $\gf(q^n)$ over $\gf(q)$ and $\ell(x)=\sum_{i=0}^{n-1} \ell_i x^{q^i} \in \gf(q)[x]$,
where
\begin{eqnarray*}
&& \ell_0 = g_0, \\
&& \ell_i =0 \mbox{ for all $i$ with } 1 \leq i \leq n-s-1, \\
&& \ell_{n-i} = g_i  \mbox{ for all $i$ with } 1 \leq i \leq s.
\end{eqnarray*}
It then follows from the proof of Theorem \ref{thm1} that the code $\calC_{(n, \beta, \ell)}$ has generator matrix
$G_1$. Hence, $C=\calC_{(n, \beta, \ell)}$. This completes the proof.
\end{proof}

\section {Concluding remarks}\label{sec-summ}

In this paper, we proposed another $q$-polynomial approach to all cyclic codes over $\gf(q)$, and gave a specific
expression of their generator matrices. It would be very interesting to employ this approach to find out cyclic codes with
desirable parameters.


\begin{thebibliography}{10}

\bibitem{DLLZ2016} C. Ding, C. Li, N. Li and Z. Zhou, Three-weight cyclic codes and their weight distributions,
Discrete Mathematics 339(2): 415-427 (2016)

\bibitem{DingLing} C. Ding and S. Ling, A $q$-polynomial approach to cyclic codes,
Finite Fields and Their Applications 20: 1-14 (2013)

\bibitem{DY2013}
C. Ding and J. Yang, Hamming weights in irreducible cyclic codes, Discrete Mathematics 313(4): 434-446 (2013).

\bibitem{HPbook} W. C. Huffman, V. Pless, Fundamentals of Error-Correcting Codes,
Cambridge University Press, Cambridge, 2003.


\bibitem{LYX} C, Lv, T. Yan and G. Xiao, New developments in $q$-polynomial codes,
Cryptography and Communications 8(1): 103-112 (2016)

\bibitem{MacSlo} F. J. MacWilliams, N. J. A. Sloane, The Theory of Error-Correcting Codes,
North-Holland Mathematical Library, vol. 16, North-Holland, Amsterdam, 1977.

\bibitem{Ding13-1}
C. Ding, Cyclic codes from cyclotomic sequences of order four. Finite Fields and Their Applications 23: 8-34 (2013)
\bibitem{Ding13-2}
C. Ding, Cyclic Codes from Some Monomials and Trinomials. SIAM J. Discrete Math. 27(4): 1977-1994 (2013)

\end{thebibliography}
\end{document}